\documentclass[runningheads]{llncs}
\usepackage{amssymb}
\usepackage{graphicx}
\usepackage{epstopdf}
\usepackage{multirow}
\usepackage{bm}

\begin{document}
\title{An Improved Approximation for Packing Big Two-Bar Charts\thanks{The research
is carried out within the framework of the state contract of the
Sobolev Institute of Mathematics (project 0314--2019--0014).}}
\author{Adil Erzin\inst{1}\orcidID{0000-0002-2183-523X} \and
Vladimir Shenmaier\inst{1}\orcidID{0000-0002-4692-1994}}
\authorrunning{A. Erzin, V. Shenmaier}
\institute{Sobolev Institute of Mathematics, SB RAS, Novosibirsk 630090, Russia\\
\email adilerzin@math.nsc.ru, shenmaier@mail.ru}
\maketitle              
\begin{abstract}
Recently, we presented a new Two-Bar Charts Packing Problem (2-BCPP), in which it is necessary to pack two-bar charts (2-BCs) in a unit-height strip of minimum length. The problem is a generalization of the Bin Packing Problem and 2-D Vector Packing Problem. Earlier, we have proposed several polynomial approximation algorithms. In particular, when each 2-BC has at least one bar of height more than 1/2, we have proposed a 3/2--approximation polynomial algorithm. This paper proposes an $O(n^3)$--time 16/11--approximation algorithm for packing 2-BCs when at least one bar of each BC has a height not less than 1/2 and an $O(n^{2.5})$--time 5/4--approximation algorithm for packing non-increasing or non-decreasing 2-BCs when each 2-BC has at least one bar which height is more than 1/2, where $n$ is the number of 2-BCs.
\keywords{Two-Bar Charts \and Strip Packing \and MaxTSP \and Approximation}
\end{abstract}
\section{Introduction}
Where the problem in question came from can be found in our papers \cite{Erzin20_1,Erzin20_2}. In its refined form, it is formulated as follows. We have a set of bar charts (BCs) consisting of two bars each. Any bar has a length equal to 1, and its height does not exceed 1. Let us denote such charts as 2-BCs. It is required to find a \emph{feasible} min-length packing of all 2-BCs in a unit-height strip. If we divide the strip into equal unit-length cells, then the packing length is the number of cells in which there is at least one bar. In a feasible packing, each BC's bars do not change order and must occupy adjacent cells, but they can move vertically independently of each other.

This problem was first examined in \cite{Erzin20_2}, where we described similar problems that have been well studied. Similar problems that were studied reasonably well are the bin packing problem (BPP) \cite{Baker85,Dosa07,Johnson73,Johnson85,Li97,Yue91,Yue95}, the strip packing problem (SPP)\cite{Baker80,Coffman80,Harren09,Harren14,Schiermeyer94,Steinberg97}, and the two-dimensional vector packing problem (2-DVPP) \cite{Bansal16,Christensen17,Kellerer03,Wei20}.

In the BPP, a set of items $L$, each item's size, and a set of identical containers (bins) are given. All items must be placed in a minimum number of bins. BPP is a strongly NP-hard problem. However, many approximate algorithms have been proposed for it. One of the well-known algorithms is First Fit Decreasing (FFD). As part of this algorithm, objects are numbered in non-increasing order. All items are scanned in order, and the current item is placed in the first suitable bin.  In 1973 was proved that the FFD algorithm uses no more than $11/9\ OPT(L)+4$ containers \cite{Johnson73}, where $OPT(L)$ is the minimal number of bins to pack the items from the set $L$. Then in 1985, the additive constant was reduced to 3 \cite{Baker85}, in 1991 it was reduced to 1 \cite{Yue91}, in 1997 it was reduced to 7/9 \cite{Li97}, and finally in 2007 was found the tight boundary of the additive constant equals 6/9 \cite{Dosa07}. A Modified First Fit Decreasing (MFFD) algorithm improves FFD. It was shown that $MFFD(L)\leq 71/60\ OPT(L)+31/6$ \cite{Johnson85}. Then the result was improved to $MFFD(L)\leq 71/60\ OPT(L)+1$ \cite{Yue95}.

In the Strip Packing Problem (SPP) for each rectangle $i\in L$, we know its length and height. It is required to pack all rectangles (without rotation) in a minimum length strip. The Bottom-Left algorithm \cite{Baker80} arranges rectangles in descending order of height and yields a 3--approximate solution. In 1980 was proposed algorithms with ratio 2.7 \cite{Coffman80}. Sleator \cite{Sleator80} proposed a 2.5--approximate algorithm, and this ratio was reduced by Schiermeyer \cite{Schiermeyer94} and Steinberg \cite{Steinberg97} to 2. The smallest estimate for the ratio known to date is $(5/3+\varepsilon)OPT(L)$, for any $\varepsilon > 0$ \cite{Harren14}.

The 2-DVPP is a generalization of the BPP and a particular case of 2-BCPP. It considers two attributes for each item and bin. The objective is to minimize the number of containers used. In \cite{Kellerer03} was presented a 2--approximate algorithm for 2-DVPP vector packing. In \cite{Christensen17}, one can find a survey of approximation algorithms. The best one yields a $(3/2+\varepsilon)$--approximate solution, for any $\varepsilon >0$ \cite{Bansal16}.

In \cite{Erzin20_2}, we proposed an $O(n^2)$--time algorithm to build a packing for $n$ 2-BCs which length is at most $2\ OPT+1$, where $OPT$ is the minimum packing length for 2-BCPP. Then in \cite{Erzin20_3}, we presented the polynomial algorithms to solve the particular cases of the 2-BCPP when all BCs are ``big'' (at least one bar's height of each BC is more than 1/2). For the case of big non-increasing or non-decreasing BCs (when either all BCs have the first bar not less than the second, or all BCs have the second bar not less than the first), an $O(n^{3.5})$--time 3/2--approximate algorithm was proposed. For arbitrary big BCs, the $O(n^4)$--time 3/2--approximate algorithm has been proposed. The indicated time complexity characterizes the developed algorithms' time execution based on the construction of $O(n)$ matchings. To achieve the specified accuracy, one can construct only one (first) matching. Therefore, the complexity of getting the specified accuracy is $O(n^{2.5})$ and $O(n^3)$, respectively.

\subsection{Our contribution}
This paper updates the estimates for the packing length of big 2-BCs, keeping the time complexity. First, we give a $5/4$--approximation $O(n^{2.5})$--time algorithm for the version of 2-BCPP which we call 2-BCPP$|$1 and which contains all the instances of 2-BCPP with big non-increasing or non-decreasing 2-BCs. In 2-BCPP$|$1, we are given arbitrary (not necessarily big and not necessarily non-increasing or non-decreasing) 2-BCs, and the goal is to find a min-length packing in which two neighboring 2-BCs intersect by at most one cell on the strip.
The proposed algorithm is based on an approximation-preserving reduction of 2-BCPP$|$1 to the maximum traveling salesman problem and using the result of \cite{Paluch} for the latter.
It should be noted that MaxTSP is one of the most intensively researched optimization problems but has very few natural, real-life applications.
2-BCPP$|$1 can be considered as one of such applications.

The second result is a $16/11$--approximation $O(n^3)$--time algorithm for 2-BCPP with big BCs.
This estimate is valid for the case of ``non-strictly big'' BCs, in which at least one bar is of height at least 1/2.
In obtaining this estimate, we use the known algorithm for finding a max-cardinality matching \cite{Gabow} and the proposed approximation algorithm for 2-BCPP$|$1.

\vspace{0.5cm}
The rest of the paper is organized as follows.
Section 2 provides a statement of the 2-BCPP and necessary definitions.
In Section 3, we describe a $5/4$--approximation algorithm for 2-BCPP$|$1.
In Section 4, we describe a $16/11$--approximation algorithm for the case of 2-BCPP in which all the 2-BCs are non-strictly big.
Section 5 concludes the paper.

\section{Formulation of the problem}
Let a semi-infinite strip of unit height be given on a plane in the first quadrant, the lower boundary of which coincides with the abscissa. A set $S$, $|S|=n$ of 2-BCs is also given. Each chart, $i\in S$, consists of two unit-length bars. The height of the first (left) bar is $a_i\in (0,1]$ and of the second (right) $b_i\in (0,1]$. Let us divide the strip into equal unit-length and unit-height rectangles (cells), starting from the strip's origin, and number them with integers $1,2,\ldots$.

\begin{definition}
\emph{Packing} is a function $p:S\rightarrow \mathbb{Z}^+$, which associates with each BC $i$ the cell number of the strip $p(i)$ into which the first bar of BC $i$ falls and the sum of the bar's heights that fall into any cell does not exceed 1.
\end{definition}

As a result of packing $p$, bars from 2-BC $i$ occupy the cells $p(i)$ and $p(i)+1$.

\begin{definition}
The packing \emph{length} $L(p)$ is the number of strip cells in which at least one bar falls.
\end{definition}

We assume that any packing $p$ begins from the first cell, and in each cell $1,\ldots,L(p$), there is at least one bar. If this is not the case, then the whole packing or a part of it can be moved to the left.

In \cite{Erzin20_2}, we formulated 2-BCPP in the form of BLP, which we do not need in this paper. Here the problem can be formulated as follows.

\vspace{0.5cm}
\textbf{2-BCPP: Given an $\bm n$-element set $\bm S$ of $\bm2$-BCs, it is required to construct a packing of $S$ into a strip of minimum length, i.e., using the strip's minimum number of cells.}
\vspace{0.5cm}

The 2-BCPP is strongly NP-hard as a generalization of the BPP \cite{Johnson73}. Moreover, the problem is $(3/2-\varepsilon)$--inapproximable unless P=NP \cite{Vazirani01}. In \cite{Erzin20_2}, we proposed an $O(n^2)$--time algorithm, which packs the 2-BCs in the strip of length at most $2\ OPT+1$, where $OPT$ is the minimum packing length. In \cite{Erzin20_3}, we proposed two packing algorithms based on sequential matchings. Using only the first matching, one can construct a 3/2--approximate solution with time complexity $O(n^3)$ for the case when all BCs are big and with $O(n^{2.5})$ time complexity when additionally the BCs are non-increasing or non-decreasing.

This paper proposes two new packing algorithms based on matching and approximation solutions to the max-weight Hamiltonian tour in the complete digraph with arcs' weight 0 or 1. If all 2-BCs are non-strictly big, one algorithm constructs a 16/11--approximate solution with time complexity $O(n^3)$. If all 2-BCs are big non-increasing or non-decreasing, then the other algorithm constructs a 5/4--approximate solution with time complexity $O(n^{2.5})$.

\begin{definition}
Two 2-BCs form a $t$-union if they can be placed in the strip's $4-t$ cells.
\end{definition}

In what follows, we will need a problem of constructing a max-weight Hamiltonian tour in a complete digraph with arc's weights 0 and 1. Let us denote this problem as MaxATSP(0,1).

\section{A $5/4$--approximation for 2-BCPP limited to 1-unions}
In this section, we describe a $5/4$--approximation algorithm for the version of 2-BCPP, where we are allowed to use only 0- and 1-unions:

\vspace{0.5cm}
\textbf{$\bm2$-BCPP$\bm{|1}$: Given an $\bm n$-element set $\bm S$ of $\bm2$-BCs, it is required to construct a min-length packing of $\bm S$ into a strip in which every pair of successive BCs forms a $\bm t$-union, where $\bm{t\le 1}$.}
\vspace{0.5cm}

In this version, the left and the right bars in each BC $i\in S$ may have arbitrary values $a_i,b_i\in(0,1]$. It is easy to see that 2-BCPP$|$1 contains all the instances of 2-BCPP with non-increasing (or non-decreasing) big 2-BCs.

The suggested algorithm will be used not only for $2$-BCPP$|1$ but also for finding approximate solutions of $2$-BCPP with big $2$-BCs (see Sect.~4). This algorithm is based on a simple approximation-preserving reduction of $2$-BCPP$|1$ to the maximum traveling salesman problem and using known algorithmic results for the latter. It seems quite impressive in learning $2$-BCPP but also in learning MaxTSP since we give a natural, real-life application for MaxTSP, which is one of the most studied optimization problems that have almost no practical applications.

\subsection{A reduction of $2$-BCPP$|1$ to MaxATSP$(0,1)$}
Consider the complete weighted directed graph $G_1(S)=(S,S^2)$ in which the weight of each arc $(i,j)\in S^2$ is defined as
$$\left\{
\begin{array}{cl}
 1 &\mbox{if $b_i+a_j\le 1$,}\\
 0 &\mbox{otherwise.}
\end{array}
\right.$$
Informally speaking, the weights of the edges in $G_1(S)$ describe which pairs of BCs can form a $1$-union and which cannot. Note that, in the general case, these weights are asymmetric since $b_i+a_j$ may differ from $b_j+a_i$.

Denote by ${\cal P}_1(S)$ the set of $S$ packings which consist of $0$- and $1$-unions. Then, for each packing $P\in{\cal P}_1(S)$, define the Hamiltonian cycle $H(P)$ in $G_1(S)$ with the sequence of vertices $i_1,\dots,i_n,i_1$, where $i_1,\dots,i_n$ is the sequence of $2$-BCs in $P$ in order left to right.
Obviously, we have
\begin{equation}\label{eq1}
w(H(P))\ge k_1(P),
\end{equation}
where $w(.)$ is the total weight of a Hamiltonian cycle and $k_1(.)$ denotes the number of $1$-unions in a packing.

Now, let $H$ be an arbitrary Hamiltonian cycle in $G_1(S)$.
If $w(H)<n$, select a BC $i\in S$ such that the arc in $H$ incoming to $i$ is of zero weight; otherwise, select any chart $i\in S$.
Let $i_1,\dots,i_n,i_1$ be the sequence of vertices in $H$ starting with $i_1=i$ and define the packing $P(H)$ whose sequence of charts in order left to right is $i_1,\dots,i_n$ and each pair $(i_t,i_{t+1})$, $1\le t<n$, forms a $1$-union whenever it is possible, i.e., when $b_{i_t}+a_{i_{t+1}}\le 1$.
Then it is easy to see that
\begin{equation}\label{eq2}
k_1(P(H))=\min\{w(H),n-1\}.
\end{equation}

Immediate corollaries of (\ref{eq1}),\,(\ref{eq2}) are the following simple statements, which give the desired reduction to MaxATSP(0,1).

\begin{lemma}\label{lem1}
If $H^*$ is a max-weight Hamiltonian cycle in $G_1(S)$, then $P(H^*)$ is a min-length packing in ${\cal P}_1(S)$.
\end{lemma}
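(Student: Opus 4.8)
The plan is to prove Lemma~\ref{lem1} by a direct two-way comparison between the optimal value of MaxATSP$(0,1)$ on $G_1(S)$ and the minimum length over packings in ${\cal P}_1(S)$, using the fact that a packing consisting only of $0$- and $1$-unions of $n$ charts has length $L(P) = 2n - k_1(P)$ (each chart contributes two cells, and each $1$-union saves exactly one cell). Thus minimizing $L(P)$ over ${\cal P}_1(S)$ is equivalent to maximizing $k_1(P)$, and it suffices to show that $P(H^*)$ maximizes $k_1$ over ${\cal P}_1(S)$.

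First I would establish the upper bound on what any packing can achieve: for an arbitrary $P \in {\cal P}_1(S)$, inequality~(\ref{eq1}) gives $k_1(P) \le w(H(P)) \le w(H^*)$, since $H(P)$ is a Hamiltonian cycle in $G_1(S)$ and $H^*$ is of maximum weight. Hence $\max_{P \in {\cal P}_1(S)} k_1(P) \le w(H^*)$. But we also need the sharper bound $k_1(P) \le n-1$, which holds trivially because a packing of $n$ charts laid out left to right has only $n-1$ consecutive pairs, so at most $n-1$ of them can be $1$-unions. Combining, $\max_{P} k_1(P) \le \min\{w(H^*), n-1\}$.

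Next I would show $P(H^*)$ attains this bound. By construction, $P(H^*) = P(H)$ for $H = H^*$, so equation~(\ref{eq2}) yields $k_1(P(H^*)) = \min\{w(H^*), n-1\}$. The one point that needs care here is the case distinction in the definition of $P(H)$: when $w(H^*) < n$ we must be able to pick a vertex $i$ whose incoming arc in $H^*$ has weight $0$ — this exists precisely because not all $n$ arcs of $H^*$ have weight $1$ — and starting the linear layout at that vertex ensures the one ``broken'' arc of the cycle is a zero-weight arc, so no potential $1$-union is lost in passing from the cycle to the path; when $w(H^*) \ge n$, i.e.\ $w(H^*) = n$, every arc has weight $1$, every consecutive pair forms a $1$-union, and $k_1(P(H^*)) = n-1$. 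Either way $k_1(P(H^*)) = \min\{w(H^*), n-1\}$, which matches the upper bound, so $P(H^*)$ is a min-length packing in ${\cal P}_1(S)$.

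The argument is essentially bookkeeping once the length formula $L(P) = 2n - k_1(P)$ is in hand, so there is no real obstacle; the only subtlety worth spelling out is the role of the zero-weight-arc choice of starting vertex in~(\ref{eq2}), i.e.\ verifying that the cycle-to-path conversion loses a $1$-union only when it is unavoidable (when $H^*$ is a monochromatic all-ones cycle). I would state this explicitly rather than leave it to the reader, since it is the hinge that makes~(\ref{eq1}) and~(\ref{eq2}) combine into an exact equality rather than just a bound.
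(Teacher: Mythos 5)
Your proof is correct and follows essentially the same route as the paper: both arguments combine the bound $k_1(P)\le\min\{w(H^*),n-1\}$ from~(\ref{eq1}) with the exact value $k_1(P(H^*))=\min\{w(H^*),n-1\}$ from~(\ref{eq2}); the paper merely phrases the comparison as a proof by contradiction, while you argue directly via the length formula $L(P)=2n-k_1(P)$. Your extra remarks unpacking why~(\ref{eq2}) holds (the choice of a zero-weight incoming arc as the break point of the cycle) are a sound elaboration of a step the paper states before the lemma and takes as given.
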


\begin{proof}
Indeed, by (\ref{eq2}), we have $k_1(P(H^*))=\min\{w(H^*),n-1\}$.
Suppose that $k_1(P)>\min\{w(H^*),n-1\}$ for some $P\in{\cal P}_1(S)$.
Then, since the number of $1$-unions in any packing is at most $n-1$, we have $k_1(P)>w(H^*)$.
So, by (\ref{eq1}), we obtain that $w(H(P))\ge k_1(P)>w(H^*)$, which contradicts the choice of $H^*$.
The lemma is proved.
\hfill$\Box$
\end{proof}

\begin{lemma}\label{lem2}
Suppose that $H$ is an $\alpha$--approximate solution to MaxATSP(0,1) in $G_1(S)$ for some $\alpha\in(0,1)$. Then the number of
1-unions in the packing $P(H)$ is at least $\alpha$ of that in an optimum packing in ${\cal P}_1(S)$.
\end{lemma}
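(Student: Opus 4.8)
The plan is to relate the optimum of MaxATSP(0,1) in $G_1(S)$ to the optimum of the packing problem in ${\cal P}_1(S)$, and then push the $\alpha$--approximation guarantee through the reduction. Write $k_1^*=\max\{k_1(P):P\in{\cal P}_1(S)\}$ for the number of $1$-unions in an optimum packing in ${\cal P}_1(S)$, and write $w^*=w(H^*)$ for the weight of a max-weight Hamiltonian cycle $H^*$ in $G_1(S)$. The first step is to observe the exact relation between these two quantities: by~(\ref{eq1}) applied to an optimum packing we get $w^*\ge k_1^*$, and by~(\ref{eq2}) applied to $H^*$ together with Lemma~\ref{lem1} we get $k_1^*=k_1(P(H^*))=\min\{w^*,n-1\}$. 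Since $k_1^*\le n-1$ always, these two facts combine to give $k_1^*=\min\{w^*,n-1\}$; in particular $k_1^*\le w^*$, and moreover $w^*\le n-1$ exactly when $w^*=k_1^*$, while if $w^*=n$ then $k_1^*=n-1$.

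The second step is to apply~(\ref{eq2}) to the given $\alpha$--approximate cycle $H$. By definition of $\alpha$--approximation, $w(H)\ge\alpha\,w^*$. Then~(\ref{eq2}) gives $k_1(P(H))=\min\{w(H),n-1\}\ge\min\{\alpha w^*,n-1\}$. Now I split into two cases according to whether $w^*=n$ or $w^*\le n-1$. If $w^*\le n-1$, then $k_1^*=w^*$ and $\alpha w^*\le w^*\le n-1$, so $k_1(P(H))\ge\alpha w^*=\alpha k_1^*$, as desired. If $w^*=n$, then $k_1^*=n-1$ and $\alpha w^*=\alpha n$; here I would argue $\min\{\alpha n,n-1\}\ge\alpha(n-1)=\alpha k_1^*$, which holds because both $\alpha n\ge\alpha(n-1)$ and $n-1\ge\alpha(n-1)$ (the latter since $\alpha\in(0,1)$). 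In either case $k_1(P(H))\ge\alpha k_1^*$, which is exactly the claimed bound.

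I would then remark that $P(H)$ lies in ${\cal P}_1(S)$ by construction (every successive pair forms a $0$- or $1$-union), so it is a legitimate competitor, and that a larger number of $1$-unions corresponds to a shorter packing — although the statement as phrased only asserts the bound on the number of $1$-unions, so strictly only the inequality $k_1(P(H))\ge\alpha k_1^*$ needs to be proved. The main obstacle, such as it is, is purely bookkeeping: handling the degenerate boundary case $w^*=n$ (equivalently, when an all-$1$-unions Hamiltonian cycle exists but a packing can only chain $n-1$ of its arcs) so that the $\min$ with $n-1$ does not spoil the ratio. This is resolved cleanly by the observation that $\alpha\in(0,1)$ forces $n-1\ge\alpha(n-1)$, so the truncation never drops us below $\alpha k_1^*$. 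No heavy machinery is needed beyond~(\ref{eq1}), (\ref{eq2}), and Lemma~\ref{lem1}.
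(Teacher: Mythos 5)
Your proof is correct and follows essentially the same route as the paper's: identify the optimum number of $1$-unions with $\min\{w(H^*),n-1\}$ via Lemma~\ref{lem1} and~(\ref{eq2}), then push $w(H)\ge\alpha w(H^*)$ through~(\ref{eq2}). The paper compresses your two-case analysis into the single inequality $\min\{w(H),n-1\}\ge\alpha\min\{w(H^*),n-1\}$, but the underlying justification is exactly your observation that $n-1\ge\alpha(n-1)$ for $\alpha\in(0,1)$.
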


\begin{proof}
By Lemma~1, if $H^*$ is a max-weight Hamiltonian cycle in $G_1(S)$, then $P(H^*)$ is a min-length packing in ${\cal P}_1(S)$, so the number of $1$-unions in optimum packings is $k_1(P(H^*))$.
On the other hand, by (\ref{eq2}), we have
$$k_1(P(H))=\min\{w(H),n-1\}\ge\alpha\min\{w(H^*),n-1\}=\alpha k_1(P(H^*)).$$
The lemma is proved.
\hfill$\Box$
\end{proof}

By Lemma~\ref{lem2}, if $H$ is an $\alpha$--approximate solution to MaxATSP(0,1) in $G_1(S)$ and $k_1^*$ is the number of $1$-unions in an optimum packing in ${\cal P}_1(S)$, then the approximation ratio of the packing $P(H)$ is at most
\begin{equation}\label{eq3}
\frac{2n-\alpha k_1^*}{2n-k_1^*}\le\frac{2n-\alpha(n-1)}{2n-(n-1)}=\frac{(2-\alpha)\,n+\alpha}{n+1}<2-\alpha.
\end{equation}
Thus, we obtain an approximation-preserving reduction of $2$-BCPP$|1$ to the maximum traveling salesman problem with asymmetric weights $0$ and $1$, which is usually denoted MaxATSP$(0,1)$.
This reduction transforms $\alpha$--approximate solutions of the corresponding instances of MaxATSP$(0,1)$ to $(2-\alpha)$--approximate solutions of $2$-BCPP$|1$.

Moreover, it is easy to construct such a reduction to the partial case of MaxATSP$(0,1)$ where the input graphs are of even size.
Indeed, if $n$ is odd, we will consider the set $S'=S\cup\{\tau\}$, where $\tau$ is the dummy chart with bars $a_\tau=1$ and $b_\tau=1$.

\begin{lemma}\label{lem3}
Suppose that $H$ is an $\alpha$--approximate solution to MaxATSP(0,1) in $G_1(S')$ for some $\alpha\in(0,1)$ and $P$ is the packing obtained from $P(H)$ by removing $\tau$.
Then the number of $1$-unions in the packing $P$ is at least $\alpha$ of that in an optimum packing in ${\cal P}_1(S)$.
\end{lemma}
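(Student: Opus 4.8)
The plan is to reduce the odd-size case to the even-size machinery already established, via the dummy chart $\tau$ with $a_\tau=b_\tau=1$. First I would observe the key structural fact about $\tau$: since $a_\tau=1$, no arc incoming to $\tau$ in $G_1(S')$ can have weight $1$ (we would need $b_i+1\le 1$, impossible for $b_i>0$), and symmetrically no arc outgoing from $\tau$ has weight $1$. Hence in \emph{any} Hamiltonian cycle $H$ in $G_1(S')$, both arcs at $\tau$ have weight $0$, and removing $\tau$ from the cyclic order (splicing its two neighbours together is not needed — we just delete $\tau$ from the left-to-right sequence $P(H)$) costs us nothing: the packing $P$ obtained from $P(H)$ by deleting $\tau$ has $k_1(P)\ge k_1(P(H))$, because the only unions of $P(H)$ that touch $\tau$ are $0$-unions, and deleting $\tau$ can only merge a $0$-union-neighbour pair into a possibly-$1$ union or leave things unchanged. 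So $k_1(P)\ge k_1(P(H))=\min\{w(H),n\}$ by \eqref{eq2} applied to the $n{+}1$-vertex graph $G_1(S')$.

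Next I would compare with the optimum for the original set $S$. Let $k_1^*$ be the number of $1$-unions in an optimum packing in ${\cal P}_1(S)$; by Lemma~\ref{lem1} this equals $k_1(P(H^{*}))$ where $H^{*}$ is a max-weight Hamiltonian cycle in $G_1(S)$, and $w(H^{*})=\min$-business gives $w(H^{*})\ge k_1^{*}$. The point is to produce from an optimum structure on $S$ a Hamiltonian cycle on $S'$ of weight at least $k_1^{*}$: take an optimum packing $P^{*}\in{\cal P}_1(S)$ realizing $k_1^{*}$ ones, append $\tau$ at the right end, and close up into a cycle on $S'$; all arcs of the old chain survive (still weight whatever they were), and the two new arcs touching $\tau$ have weight $0$. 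This cycle $H'$ on $G_1(S')$ has $w(H')\ge k_1^{*}$, so the max-weight Hamiltonian cycle $H'^{*}$ on $G_1(S')$ satisfies $w(H'^{*})\ge k_1^{*}$. Since $H$ is $\alpha$-approximate for MaxATSP$(0,1)$ on $G_1(S')$, we get $w(H)\ge\alpha\, w(H'^{*})\ge\alpha\, k_1^{*}$.

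Putting the pieces together: $k_1(P)\ge\min\{w(H),\,n\}\ge\min\{\alpha k_1^{*},\,n\}=\alpha k_1^{*}$, since $k_1^{*}\le n-1<n$ and $\alpha<1$. This is exactly the claimed bound — the number of $1$-unions in $P$ is at least $\alpha$ times the number in an optimum packing in ${\cal P}_1(S)$. I would then note the immediate corollary analogous to \eqref{eq3}: the packing $P$ uses at most $2n - \alpha k_1^{*}$ cells against an optimum of $2n-k_1^{*}$, so the approximation ratio is again below $2-\alpha$ for the original $n$-chart instance.

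The step I expect to need the most care is the claim that deleting $\tau$ from $P(H)$ does not destroy any $1$-union, i.e. $k_1(P)\ge k_1(P(H))$. One must check the boundary cases: $\tau$ could sit at an end of the left-to-right sequence $P(H)$ (then deletion trivially keeps all unions among the remaining charts, since those consecutive pairs are unchanged), or $\tau$ could be interior, say between charts $u$ and $v$; then the pairs $(u,\tau)$ and $(\tau,v)$ were both $0$-unions, and after deletion $u,v$ become consecutive and either form a $1$-union or a $0$-union — either way the total count does not drop. One subtlety worth spelling out is that Lemma~\ref{lem3} as stated defines $P$ by removing $\tau$ from $P(H)$ (the \emph{packing}), which is exactly this re-packing of the shortened sequence; so the inequality $k_1(P)\ge k_1(P(H))$ is the crux, and everything else is bookkeeping with \eqref{eq1}, \eqref{eq2}, and Lemma~\ref{lem1}.
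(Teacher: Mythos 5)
Your argument is correct and follows essentially the same route as the paper: you use the fact that $\tau$ (with $a_\tau=b_\tau=1$) cannot participate in any $1$-union, apply (\ref{eq2}) to the $(n{+}1)$-vertex graph $G_1(S')$, and lower-bound the optimum of MaxATSP$(0,1)$ on $G_1(S')$ by appending $\tau$ to an optimum packing of $S$ to form the cycle $(i_1,\dots,i_n,\tau,i_1)$ of weight $k_1^*$. The only cosmetic difference is that the paper asserts $k_1(P)=k_1(P(H))$ where you more cautiously prove $k_1(P)\ge k_1(P(H))$; either suffices.
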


\begin{proof}
Let $H^*$ be an optimum solution of MaxATSP(0,1) on $G_1(S')$.
Then, by (\ref{eq2}), we have $k_1(P(H))=\min\{w(H),n\}\ge\min\{\alpha w(H^*),n\}$.
But the BC $\tau$ can not be in any $1$-union, so $k_1(P)=k_1(P(H))$.
At the same time, if $P^*$ is an optimum packing in ${\cal P}_1(S)$ and the sequence of charts in $P^*$ in order left to right is $i_1,\dots,i_n$, then the weight of the Hamiltonian cycle $H_\tau(P^*)=(i_1,\dots,i_n,\tau,i_1)$ in $G_1(S')$ is exactly $k_1(P^*)$.
So we have
$$k_1(P)=k_1(P(H))\ge\min\{\alpha w(H^*),n\}\ge\min\{\alpha w(H_\tau(P^*)),n\}=\alpha k_1(P^*).$$
The lemma is proved.
\hfill$\Box$
\end{proof}

Thus, if $H$ is an $\alpha$--approximate solution to MaxATSP(0,1) in $G_1(S')$ and, as before, $k_1^*$ is the number of $1$-unions in an optimum packing in ${\cal P}_1(S)$, then we obtain a solution of $2$-BCPP$|1$ on the set $S$ with approximation ratio bounded by the same expressions as in (\ref{eq3}).

\subsection{An algorithm for $2$-BCPP$|1$}
It remains to recall the known algorithmic results for MaxATSP$(0,1)$.
Currently best approximations for MaxATSP$(0,1)$ are the $3/4$--ap\-p\-ro\-xi\-ma\-ti\-on LP-based algorithm of Bl\"{a}ser \cite{Blaser} and the $3/4$--approximation combinatorial algorithm of Paluch \cite{Paluch}.
The running time of the latter algorithm is $O(n^{2.5})$ if $n$ is even and $O(n^{3.5})$ if $n$ is odd.
We suggest using the above reduction to MaxTSP with an even number of vertices and applying Paluch's algorithm.
The resulting algorithm for $2$-BCPP$|1$ can be described as follows:\bigskip

\noindent\textbf{Algorithm~${\cal A}_1$.}\smallskip

\noindent\emph{Input:} a set $S$ of $n$ 2-BCs.
\emph{Output:} a packing $P\in{\cal P}_1(S)$.\medskip

\noindent\emph{Step~{\rm 1}.}
If $n$ is even, construct the graph $G=G_1(S)$; otherwise, construct the graph $G=G_1(S')$, where $S'=S\cup\{\tau\}$, $a_\tau=b_\tau=1$.\medskip

\noindent\emph{Step~{\rm 2}.}
By using the algorithm from \cite{Paluch}, find a $3/4$--approximate solution $H$ to MaxATSP(0,1) in $G$.\medskip

\noindent\emph{Step~{\rm 3}.}
If $n$ is even, return $P=P(H)$;
otherwise, return the packing $P$ obtained from $P(H)$ by removing $\tau$.\bigskip

By Lemmas \ref{lem2}, \ref{lem3} and estimate (\ref{eq3}), the approximation ratio of the packing returned by Algorithm~${\cal A}_1$ is less than $2-3/4=5/4$.
So we prove

\begin{theorem}\label{th5_4}
Algorithm~${\cal A}_1$ finds a $5/4$--approximate solution to $2$-BCPP$|1$ in time $O(n^{2.5})$.
\end{theorem}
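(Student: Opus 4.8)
The plan is to split the statement into the approximation guarantee and the time bound, reducing each to results already proved in this section together with the properties of Paluch's algorithm \cite{Paluch}. For the approximation ratio I would argue by cases on the parity of $n$. If $n$ is even, Step~1 sets $G=G_1(S)$ and Step~2 returns a $3/4$-approximate solution $H$ to MaxATSP$(0,1)$ on $G$; applying Lemma~\ref{lem2} with $\alpha=3/4$, the packing $P(H)$ returned in Step~3 contains at least $(3/4)\,k_1^*$ one-unions, where $k_1^*$ is the number of one-unions in an optimum packing in ${\cal P}_1(S)$. Plugging $\alpha=3/4$ into estimate~(\ref{eq3}) then bounds the length of $P(H)$ by strictly less than $2-3/4=5/4$ times the optimum. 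If $n$ is odd, Step~1 uses $G=G_1(S')$ with the dummy chart $\tau$ ($a_\tau=b_\tau=1$), Step~2 again produces a $3/4$-approximate $H$, and Step~3 deletes $\tau$; here Lemma~\ref{lem3} replaces Lemma~\ref{lem2}, again yielding at least $(3/4)\,k_1^*$ one-unions, and the same estimate~(\ref{eq3}) gives a ratio below $5/4$. In both cases the returned object is a feasible packing in ${\cal P}_1(S)$ by the definition of $P(\cdot)$, so nothing further is needed for correctness.

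For the running time I would bound the three steps in turn. Constructing $G_1(S)$ (or $G_1(S')$) requires computing $b_i+a_j$ and comparing it to $1$ for every ordered pair, hence $O(n^2)$ time. The key point is that the digraph passed to Paluch's algorithm always has an even number of vertices --- namely $n$ when $n$ is even and $n+1$ when $n$ is odd --- so the even-input branch of that algorithm applies, which runs in time $O(m^{2.5})$ on $m$ vertices; with $m\in\{n,n+1\}$ this is $O(n^{2.5})$. This is exactly why the reduction was tailored to even-size instances, avoiding the $O(m^{3.5})$ cost incurred in the odd case. Finally, extracting $P(H)$ from $H$ --- choosing the start vertex as prescribed, then scanning the consecutive pairs and forming a one-union whenever $b_{i_t}+a_{i_{t+1}}\le 1$ --- and removing $\tau$ if present is a single pass, $O(n)$ time. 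Summing the three steps gives $O(n^{2.5})$.

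I do not expect a genuine obstacle once Lemmas~\ref{lem1}--\ref{lem3} and estimate~(\ref{eq3}) are available: the argument is essentially a chaining of these facts. The only points needing a little care are that~(\ref{eq3}) is legitimate because $k_1^*\le n-1$ holds for every packing (so the monotone bound used there is valid), and that the parity bookkeeping of Steps~1 and~3 is matched with the correct lemma --- Lemma~\ref{lem2} in the even case, Lemma~\ref{lem3} in the odd case.
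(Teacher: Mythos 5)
Your proposal is correct and follows essentially the same route as the paper: the paper also derives the $5/4$ bound by combining Lemma~\ref{lem2} (even $n$) and Lemma~\ref{lem3} (odd $n$, via the dummy chart $\tau$) with estimate~(\ref{eq3}) at $\alpha=3/4$, and obtains the $O(n^{2.5})$ running time from the even-size branch of Paluch's algorithm. Your additional remarks on the validity of the monotone bound in~(\ref{eq3}) and the per-step cost accounting are consistent with, and slightly more explicit than, the paper's argument.
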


\section{A $16/11$--approximation for $2$-BCPP with big charts}
This section describes a $16/11$--approximation algorithm for the case of $2$-BCPP in which all $2$-BCs are non-strictly big.
Here, we assume that a chart $i$ with bars $a_i,b_i\in(0,1]$ is \emph{non-strictly big} if $\max\{a_i,b_i\}\ge 1/2$, i.e., the case when $\max\{a_i,b_i\}=1/2$ is admissible.

\vspace{0.5cm}
\textbf{$\bm2$-BCPP$\bm{|}$big: Given an $\bm n$-element set $\bm S$ of non-strictly big $\bm2$-BCs, it is required to construct a min-length packing of $\bm S$.}
\vspace{0.5cm}

Let us make some simple observations.
First, it is easy to see that, since all the charts in $S$ are non-strictly big, then any two pairs of charts $\{i_1,i_2\}$ and $\{i_3,i_4\}$ which form $2$-unions in any packing of $S$ are disjoint.
On the other hand, any set of disjoint pairs of charts forming $2$-unions gives a feasible solution of $2$-BCPP${|}$big.
In particular, we can get such a solution by finding a max-cardinality matching $M^*$ in the graph $G_2(S)$ whose vertices are the charts of $S$ and the edges are the unordered pairs $i,j\in S$ admitting a $2$-union, i.e., for which $a_i+a_j\le 1$ and $b_i+b_j\le 1$.
Let $P(M^*)$ be the packing of $S$, which consists of $0$- and $2$-unions and all $2$-unions formed by endpoints of the edges in $M^*$.

The second observation is that any feasible solution of $2$-BCPP$|1$ on $S$ is also that of $2$-BCPP${|}$big.
On the other hand, any feasible solution to the latter problem can be easily transformed to $2$-BCPP$|1$ as follows.
If two charts $i$ and $j$ form a $2$-union and $a_i+b_j\ge a_j+b_i$, then we shift the BC $j$ and all the charts lying to the right of it one cell right. As a result, we get one cell with content height $b_i+a_j\le(a_i+b_j+a_j+b_i)/2\le 1$, while the content of the other cells does not increase. So we get a feasible packing where the pair $(i,j)$ of BCs forms a $1$-union. Denote the packing constructed by the described processing of all $2$-unions in $P$ as $\Gamma(P)$.
Then $\Gamma(P)\in{\cal P}_1(S)$ and the number of $1$-unions in $\Gamma(P)$ is exactly the total number of $1$- and $2$-unions in $P$.

If an optimum solution of $2$-BCPP${|}$big contains a small number of $1$-unions and a significant number of $2$-unions, this packing is not much shorter than $P(M^*)$. If, on the contrary, an optimum solution of $2$-BCPP${|}$big contains a significant number of $1$-unions and a small number of $2$-unions, then it is not much shorter than an optimum solution of $2$-BCPP$|1$. So the best of $P(M^*)$ and an optimum of $2$-BCPP$|1$ may be a relatively right solution to $2$-BCPP${|}$big. Based on this hypothesis, we suggest the following algorithm:\bigskip

\noindent\textbf{Algorithm~${\cal A}_2$.}\smallskip

\noindent\emph{Input:} a set $S$ of $n$ big 2-BCs.
\emph{Output:} a packing $P$ of $S$.\medskip

\noindent\emph{Step~{\rm 1}.}
By using the algorithm from \cite{Gabow}, construct a max-cardinality matching $M^*$ in $G_2(S)$.\medskip

\noindent\emph{Step~{\rm 2}.}
By using Algorithm~${\cal A}_1$, find an approximate solution $P_1$ to $2$-BCPP$|1$ on $S$.\medskip

\noindent\emph{Step~{\rm 3}.}
If the length of $P(M^*)$ is less than that of $P_1$, return $P=P(M^*)$;
otherwise, return $P=P_1$.\smallskip

\begin{theorem}\label{th16_11}
Algorithm~${\cal A}_2$ finds a $16/11$--approximate solution to $2$-BCPP$|$big in time $O(n^3)$.
\end{theorem}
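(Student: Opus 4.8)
The plan is to bound the length of the packing ${\cal A}_2$ outputs against an optimal one via the two candidate packings it builds. Fix an optimal packing $P^*$ of $S$ and let $q$ be the number of $2$-unions and $p$ the number of $1$-unions occurring in $P^*$, so that $OPT=2n-p-2q$. The $q$ pairs forming $2$-unions in $P^*$ are edges of $G_2(S)$ and, the charts being non-strictly big, are pairwise disjoint, hence form a matching in $G_2(S)$; so $|M^*|\ge q$ and $L(P(M^*))=2n-2|M^*|\le 2n-2q$. Applying the transformation $\Gamma$ described above to $P^*$ yields a packing in ${\cal P}_1(S)$ with exactly $p+q$ $1$-unions, so an optimum of $2$-BCPP$|1$ has at least $p+q$ $1$-unions; by Lemma~\ref{lem2} (or Lemma~\ref{lem3} when $n$ is odd) the packing $P_1$ produced by ${\cal A}_1$ has at least $\frac34(p+q)$ of them, whence $L(P_1)\le 2n-\frac34(p+q)$. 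Since ${\cal A}_2$ returns the shorter of the two, its output length is at most $\min\{2n-2q,\ 2n-\frac34(p+q)\}$.

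It then remains to show this minimum is at most $\frac{16}{11}\,OPT$. A short computation gives $2n-2q\le\frac{16}{11}OPT$ iff $8p+5q\le 5n$, and $2n-\frac34(p+q)\le\frac{16}{11}OPT$ iff $31p+95q\le 40n$, so it is enough to prove that every feasible packing of non-strictly big charts satisfies $8p+5q\le 5n$ or $31p+95q\le 40n$. I expect this to be the hard part, and the mere inequality $p+q\le n-1$ (the number of adjacencies in a packing) will not suffice: I would have to use that a $2$-union of two non-strictly big charts almost fills the two cells it occupies — at least one of these cells already carries height $\ge\frac12$ from the $2$-union alone — while the cell pairs of distinct $2$-unions are essentially disjoint. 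Following these "heavy" cells along a maximal block of mutually overlapping charts should bound how many $2$-unions such a block can contain once it also carries $1$-unions, and summing over blocks should give the required linear inequality in $p,q,n$ (equivalently, a lower bound on $OPT$ that excludes the unfavourable range of $(p,q)$). Carefully going through the possible left-heavy/right-heavy patterns at the two ends of each $2$-union is where the work lies.

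Granting that inequality, the proof finishes with a two-case split: if $8p+5q\le 5n$ then $L(P(M^*))\le 2n-2q\le\frac{16}{11}OPT$; otherwise the structural inequality forces $31p+95q\le 40n$, so $L(P_1)\le 2n-\frac34(p+q)\le\frac{16}{11}OPT$. Either way ${\cal A}_2$ returns a $16/11$--approximate packing. For the running time, Step~1 runs the maximum-cardinality matching algorithm of \cite{Gabow} in $O(n^3)$ time, Step~2 runs ${\cal A}_1$ in $O(n^{2.5})$ time by Theorem~\ref{th5_4}, and Step~3 is $O(n)$; the total is $O(n^3)$.
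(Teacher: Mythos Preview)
Your setup and the two upper bounds $L(P(M^*))\le 2n-2q$ and $L(P_1)\le 2n-\frac34(p+q)$ are exactly the paper's, and your reformulation into the disjunction ``$8p+5q\le 5n$ or $31p+95q\le 40n$'' is correct. The gap is in what you call ``the hard part'': no structural analysis of heavy cells or block patterns is needed. The single missing ingredient is the lower bound $OPT\ge n$ for non-strictly big charts (established in \cite{Erzin20_3} and quoted in the paper's proof), i.e., $p+2q\le n$. You explicitly discard $p+q\le n-1$ as too weak, which is true, but you never consider $p+2q\le n$, which is precisely the right inequality.

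With $p+2q\le n$ your disjunction follows immediately: if $8p+5q>5n$ then combining with $p\le n-2q$ forces $q<\tfrac{3}{11}n$, while if $31p+95q>40n$ the same substitution forces $q>\tfrac{3}{11}n$; so both cannot fail. Equivalently, and this is how the paper argues, one bounds the ratio
\[
\frac{\min\{2n-2q,\ 2n-\tfrac34(p+q)\}}{2n-p-2q}
\]
subject to $p\le n-2q$, observes the maximum over $p$ is at $p=n-2q$, and then the two numerator terms cross at $q=\tfrac{3}{11}n$, giving the bound $16/11$. So your proposed case analysis of left-heavy/right-heavy patterns is unnecessary; once you invoke $OPT\ge n$ the proof is a two-line optimization. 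Your running-time analysis matches the paper's.
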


\begin{proof}
Let $P^*$ be a min-length packing of $S$. Denote by $k_1$ and $k_2$ the numbers of $1$- and $2$-unions in $P^*$, respectively.
Then, since the $2$-unions in $P^*$ form a matching in $G_2(S)$, the cardinality of $M^*$ is at least $k_2$.
Therefore, the length of $P(M^*)$ is at most $2n-2k_2$.

Next, we estimate the length of $P_1$.
By the construction of the packing $\Gamma(P^*)$, we have $k_1(\Gamma(P^*))=k_1+k_2$.
By the description of Algorithm~${\cal A}_1$ and Lemmas \ref{lem2} and~\ref{lem3}, the number of $1$-unions in $P_1$ is at least $3/4$ of that in any packing in ${\cal P}_1(S)$.
In particular, we have
$$k_1(P_1)\ge(3/4)\,k_1(\Gamma(P^*))=(3/4)(k_1+k_2).$$
So the length of the packing $P_1$ is at most $2n-(3/4)(k_1+k_2)$.

At the same time, it is easy to see that the length of $P^*$ is $2n-k_1-2k_2$.
Moreover, as shown in \cite{Erzin20_3}, this length is at least $n$.
It follows that the approximation ratio of the solution returned by Algorithm~${\cal A}_2$ is at most
\begin{equation}\label{16_11_1}
\frac{\min\{2n-2k_2,\ 2n-(3/4)(k_1+k_2)\}}{2n-k_1-2k_2},
\end{equation}
where $2n-k_1-2k_2\ge n$ or, equivalently, where $k_1\le n-2k_2$.
Obviously, the maximum value of expression (\ref{16_11_1}) is attained at the maximum possible value of $k_1$, i.e., when $k_1=n-2k_2$.
Therefore, this expression is bounded by
\begin{equation}\label{16_11_2}
\frac{\min\{2n-2k_2,\ 2n-(3/4)(n-k_2)\}}{n}.
\end{equation}
Next, the values $2n-2k_2$ and $2n-(3/4)(n-k_2)$ are decreasing and increasing functions of $k_2$ respectively, while the denominator in the expression (\ref{16_11_2}) does not depend on~$k_2$.
It follows that the maximum of this expression is attained when $2n-2k_2=2n-(3/4)(n-k_2)$, i.e., when $k_2=(3/11)n$.
So the approximation ratio of Algorithm~${\cal A}_2$ is at most
$$\frac{2n-2(3/11)n}{n}=16/11.$$

It remains to note that finding a max-cardinality matching at Step~1 by the algorithm from \cite{Gabow} takes time $O(n^3)$, while Step~2 is performed in time $O(n^{2.5})$ by Theorem~\ref{th5_4}.
Thus, the running time of Algorithm~${\cal A}_2$ is $O(n^3)$.
The theorem is proved.
\hfill$\Box$
\end{proof}

\begin{remark}
The packing returned by Algorithm~${\cal A}_2$ contains either $1$- or $2$-unions.
So this packing is also an approximate solution of the version of $2$-BCPP where, given arbitrary $2$-BCs (not necessarily big), it is required to find a min-length packing, in which every cell of the strip contains at most two bars of different charts.
Note that, using a slightly modified algorithm based on the same ideas, easy to get a $19/14$--approximation of this version.
\end{remark}

\begin{remark}
It can be easily proved that, if we are given an oracle which returns an optimum solution to $2$-BCPP$|1$, e.g., if we can solve MaxATSP$(0,1)$ exactly, then the best of $P(M^*)$ and an optimum of $2$-BCPP$|1$ is a $4/3$--approximate solution to $2$-BCPP$|$big.
\end{remark}

\section{Conclusion}
We considered a problem in which it is necessary to pack $n$ two-bar charts (2-BCs) in a unit-height strip of minimum length. The problem is a generalization of the bin packing problem and 2-D vector packing problem. Earlier, we proposed an $O(n^2)$--time algorithm, which builds a packing of length at most $2\ OPT+1$ for arbitrary 2-BCs, where $OPT$ is the minimum length of the packing. Then, we proposed an $O(n^{3.5})$-- and $O(n^4)$--time packing algorithms based on the sequential matchings. Using only the first matching, one can construct a 3/2--approximate solution with time complexity $O(n^3)$ for big BCs (when each BC has at least one bar of height greater than 1/2) and with $O(n^{2.5})$ time complexity when additionally the BCs are non-increasing or non-decreasing.

This paper proposes two new packing algorithms based on matching and constructing an approximate solution to the MaxATSP(0,1). We prove that for packing arbitrary non-strictly big 2-BCs (the height of at least one bar of each BC is not less than 1/2), one algorithm constructs a 16/11--approximate solution with time complexity $O(n^3)$. If all 2-BCs are big (at least one bar of each BC has a height greater than 1/2) non-increasing or non-decreasing, then another algorithm constructs a 5/4--approximate solution with time complexity $O(n^{2.5})$.

We plan to conduct a numerical experiment to compare the solutions constructed by various approximation algorithms with the optimal solution yielded by the software package for BLP (for example, CPLEX)  \cite{Erzin20_2}. We are also planning to obtain a new accuracy estimate for the 2-BCPP problem with arbitrary 2-BCs.

\end{document}